\newtheorem{lemma}{Lemma}
\theoremstyle{Remark}
\newtheorem{remark}{Remark}
\newtheorem{theorem}{Theorem}
\newcommand{\B}{\mathcal B}
\newcommand{\C}{\mathbb C}
\newcommand{\N}{\mathbb N}
    \newcommand{\Rmnum}[1]{\expandafter\@slowromancap\romannumeral #1@}
\def\({\left(}
\def\){\right)}
\def\[{\begin{eqnarray}}
\def\]{\end{eqnarray}}
\begin{document}

\title{Rogue waves of the Frobenius nonlinear Schr\"{o}dinger equation }

\author{
\  \ Huijuan Zhou, Chuanzhong Li\dag\footnote{Corresponding author.}\\
\small Department of Mathematics,  Ningbo University, Ningbo, 315211, China\\
\dag Email:lichuanzhong@nbu.edu.cn}

\maketitle

\abstract{
In this paper, by considering the potential application in two mode nonlinear waves in nonlinear fibers under a certain case, we  define a coupled nonlinear Schr\"{o}dinger equation(called  Frobenius NLS equation) including its Lax pair. Afterwards, we construct the Darboux transformations of the Frobenius NLS equation. New solutions can be obtained from known seed solutions by the Dardoux transformations.  Soliton solutions are generated from trivial seed solutions. Also we derive breather solutions $q,r$ of  the Frobenius NLS equation obtained from periodic seed solutions. Interesting enough, we find the amplitudes of $r$  vary in size in different areas with period-like fluctuations in the background.  This is very different from the solution $q$ of the single-component classical nonlinear Schr\"{o}dinger equation. Then, the rogue waves of the Frobenius NLS equation are given explicitly by a Taylor series expansion about the breather solutions $q,r$. Also the graph of rogue wave solution $r$ shows that the rogue wave has fluctuations around the peak. The reason of this phenomenon should be in the dynamical dependence of $r$ on $q$ which is independent on $r$.
}
\\
\\
PACS number(s): 05.45.Yv, 42.65.Tg, 42.65.Sf, 02.30.Ik \\
\noindent {{\bf Key words:}  Darboux transformations, Frobenius NLS equation, soliton solution, breather solution, rogue wave.}

\tableofcontents

\section {Introduction}
\   \   \
In the past several decades, nonlinear science has experienced an explosive growth with  several new exciting and fascinating concepts such as solitons, breathers solution, rogue waves, etc. Most of the completely integrable nonlinear  systems admit one striking nonlinear phenomenon, described as  solitons, which is of great mathematical interest. The study of the solitons and other related solutions has become one of the most exciting and active areas in the field of nonlinear sciences. Among these studies, the nonlinear Schr\"{o}dinger(NLS) equation is one of the most popular soliton equations which arises in different physical context[1]. Such equations have the stable soliton solutions, which are a fine balance between their linear dispersive and nonlinear collapsing terms.

In addition to solitons[2-3], rouge waves become the subject of intensive investigations not only in oceanography[4-6], but also in matter rogue waves[7] in Bose-Einstein condensates, rogue waves in plasmas[8], and financial rogue waves describing the  physical mechanisms in financial markets[9]. Rogue wave is localized in both space and time,
which seems to appear from nowhere and disappear without
a trace. In the above-mentioned fields, soliton systems including the nonlinear Schr\"{o}dinger(NLS) equation[10], derivative NLS systems[11-12] the Hirota equation[13-15], the NLS-Maxwell-Bloch equation[16] and the Hirota and Maxwell-Bloch equation\cite{pre,CPL,sciencephys} are considered and reported to admit rogue wave solutions.

It has been shown that the simple scalar NLS could  describe nonlinear waves in nonlinear fibers well and coupled NLS equations are often
used to describe the interaction among the modes in nonlinear
optics and BEC, etc..  By considering the interaction between two different optical fibers, the symmetric coupled NLS equation was sometimes studied such as \cite{couplednls}. In this article, we will consider another nonsymmetric coupled NLS equation called Frobenius NLS equation as following
\begin{equation}\label{fnlsequation}
\begin{cases}
iq_{t}+q_{xx}+2q^{2}q^{*}=0, \\
ir_{t}+r_{xx}+2q^{2}r^{*}+4qq^{*}r=0.
\end{cases}
\end{equation}

Where  $(q,r)$ represent the wave envelopes, $t$ is
the evolution variable, and $x$ is a spatial independent
variable.
Under the equation \eqref{fnlsequation}, we introduce two mode
optical signals into a nonlinear fiber operating in
the anomalous group velocity dispersion regime, marked
by  $(q,r)$.  This equation should describe the case in nonlinear
optics when one fiber is independent and the amplitude $r$ of nonlinear waves in another fiber can be affected by the amplitude $q$ of this one.

It was remarked that the Darboux transformation is an efficient method to generate soliton solutions of integrable equations[21-22].  The multi-solitons can be obtained by this Darboux transformation from a trivial seed solution. Being different from soliton solutions, breather solutions can be derived from periodic solutions through Darboux transformation. This kind of solution exhibits a good periodic, and other forms of solutions such as rogue wave solutions can be obtained from breather solutions.
In this paper, we shall study the rogue waves of the Frobenius NLS equation by the Darboux transformation.

This paper was arranged as follows: In section 2, we recall some basic facts about the nonlinear Schr\"{o}dinger(NLS) equation and then we give the definition of the Frobenius NLS equation and its Lax representation. In section 3, we calculated the Dardoux transformation of the Frobenius NLS equation. Using Darboux transformation, soliton solutions will be given in section 4 by assuming trivial seed solutions. In section 5, starting from a periodic seed solution, the breather solution of the Frobenius NLS equation is provided. A Taylor expansion from the breather solution will help us to construct the rogue wave solution in section 6.

\section{NLS equation and  Frobenius NLS equation}
\   \
In this section, let us firstly explore the source of the Lax form of the nonlinear Schr\"{o}dinger equation (eq.\eqref{a}). The integrable method of  soliton equations is always to use the zero curvature equation such as following
\begin{equation}\label{hh}
M_{t}-N_{x}+[M,N]=0, \  \  \ [M,N]=MN-NM.
\end{equation}

We can export many soliton equations by choosing appropriate  values of $M$ and $N$ as
\begin{equation}\label{34}
\varphi_{x}=M\varphi,\   \
\varphi_{t}=N\varphi,
\end{equation}
where
\begin{equation}\label{ii}
\varphi=\left(\begin{matrix}
\varphi_{1}\\
\varphi_{2}\end{matrix}
\right), \  \  \
M=\left(\begin{matrix}
-i\lambda&\psi\\
\mu&i\lambda\end{matrix}
\right),  \   \ \N=\left(\begin{matrix}
A&B\\
C&-A\end{matrix}
\right),
\end{equation}
and $\lambda$ is a complex parameter.

Here the zero curvature equation (eq.\eqref{hh}) becomes
\begin{center}
\begin{equation}\label{oo}
\begin{split}
&A_{x}=\psi c-\mu B,\\
&\psi_{t}=B_{x}+2i\lambda B+2\psi A,\\
&\mu_{t}=C_{x}-2i\lambda C-2\mu A.
\end{split}
\end{equation}
\end{center}

For example, by choosing appropriate values of $A$, $B$, $C$ as
\begin{footnotesize}
\begin{equation}\label{1}
\begin{cases}
A=a_{3}^{0}\lambda^{3}+a_{2}^{0}\lambda^{2}+(\frac{a_{3}^{0}}{2}\psi(-\psi^*)+a_{1}^{0})\lambda-\frac{i}{4}a_{3}^{0}(\psi(-\psi_{x}^{*})-(-\psi^*)\psi_{x})+\frac{a_{2}^{0}}{2}\psi(-\psi^*)+a_{0}^{0}, \\B=ia_{3}^{0}\psi\lambda^{2}+(-\frac{a_{3}^{0}}{2}\psi_{x}+ia_{2}^{0}\psi)\lambda+\frac{i}{4}a_{3}^{0}(-\psi_{xx}+2\psi^2(-\psi^*))-\frac{a_{2}^{0}}{2}\psi_{x}+ia_{1}^{0}\psi, \\C=ia_{3}^{0}(-\psi^*)\lambda^{2}+(\frac{a_{3}^{0}}{2}(-\psi_{x}^*)+ia_{2}^{0}(-\psi^*))\lambda+\frac{i}{4}a_{3}^{0}(\psi_{xx}^*+2\psi(-\psi^*)^2)+\frac{a_{2}^{0}}{2}(-\psi_{x}^*)+ia_{1}^{0}(-\psi^*),
\end{cases}
\end{equation}
\end{footnotesize}
where
\begin{equation}\label{jj}
a^{0}_{0}=a^{0}_{1}=a^{0}_{3}=0,\ \ \ a^{0}_{2}=-2i, \  \ \mu=-\psi^*.\notag
\end{equation}
The well-known  nonlinear Schr\"{o}dinger(NLS) equation can be derived as following
 \begin{equation} \label{a}
i\psi_{t}+\psi_{xx}+2\psi^{2}\psi^{*}=0.
\end{equation}

Now one can consider a transformation $\varphi^{[1]}=T\varphi$ which change $\varphi_{x}=M\varphi$ into $\varphi_{x}^{[1]}=M^{[1]}\varphi^{[1]}$ and this transformation can be called Darboux transformation.

 From
 \begin{equation}\label{d}
\varphi^{[1]}_{x}=T_{x}\varphi+T\varphi_{x}=M^{[1]}\varphi^{[1]}=M^{[1]}T\varphi,   \notag
\end{equation}
one can get
\begin{equation}\label{e}
T_{x}+TM=M^{[1]}T,        \notag
\end{equation}
which leads to
\begin{equation}\label{4.4}
M^{[1]}=T_{x}T^{-1}+TMT^{-1}.
\end{equation}

The same procedure may be easily adapted to obtain
\begin{equation}\label{4.7}
N^{[1]}=T_{t}T^{-1}+TNT^{-1}.
\end{equation}

Thus the following identity can be seen
\begin{equation}\label{4.8}
M_{t}^{[1]}-N^{[1]}_{x}+[M^{[1]},N^{[1]}]=T(M_{t}-N_{x}+[M,N])T^{-1}.
\end{equation}

Due to the Darboux transformation, the  zero curvature equation
\begin{equation}\label{g} M_{t}-N_{x}+[M,N]=0,  \notag
 \end{equation}
  will become
 \begin{equation}\label{f}M_{t}^{[1]}-N^{[1]}_{x}+[M^{[1]},N^{[1]}]=0.  \notag
\end{equation}

If $\psi$ is a solution of (eq.\eqref{a}), and under the Dardoux transformation, $\psi^{[1]}$ will be another solution of eq.\eqref{a}
 \begin{equation}\label{4.9}
i\psi_{t}^{[1]}+\psi_{xx}^{[1]}+2\psi^{[1]2}\psi^{[1]*}=0.
\end{equation}

Now let us consider the case when $\psi$ take values in  a specific Frobenius algebra $Z_2=\C[\Gamma]/(\Gamma^2)$ and $\Gamma=(\delta_{i,j+1})_{ij}\in gl(2,\C),$ i.e. $\psi=qE+r\Gamma$,
we can get the the Frobenius NLS equation as
\begin{equation}\label{Z2}
\begin{cases}
iq_{t}+q_{xx}+2q^2q^*=0,  \\
ir_{t}+r_{xx}+2q^2r^*+4qq^{*}r=0,
\end{cases}
\end{equation}
whose Lax equation will be given in the next subsection.
\subsection{Lax equation of the Frobenius NLS equation}
\   \  \
  In the Lax equation of the original NLS equation, we imagine
\begin{equation}\notag
a^{0}_{0}=a^{0}_{1}=a^{0}_{3}=0E, \  \ a^{0}_{2}=-2iE,
\end{equation}
in the eq.\eqref{1}.
Then we can find the Lax equation of the Frobenius NLS equation, i.e.
the following  linear spectral problem of the Frobenius NLS equation
\begin{equation}\label{laxfnls}
\varphi_{x}=F\varphi,\   \
\varphi_{t}=G\varphi.
\end{equation}
Here
\begin{equation}\notag
F=\left(\begin{matrix}
-i\lambda E&qE+r\Gamma\\
-qE-r\Gamma&i\lambda E\end{matrix}
\right),\\
\end{equation}
\begin{equation}\notag
G=\left(\begin{matrix}
(-2i\lambda^{2}+iqq^{*})E+(iq^{*}r+iqr^{*})\Gamma &(2q\lambda+iq_{x})E+(2r\lambda+ir_{x})\Gamma \\
(-2q^{*}\lambda+iq_{x}^{*})E+(-2r^{*}\lambda+ir_{x}^{*})\Gamma &(2i\lambda^{2}-iqq^{*})E-(iq^{*}r+iqr^{*})\Gamma\end{matrix}
\right),\\
\end{equation}
where
\begin{equation}\notag
E=\left(\begin{matrix}
1&0\\
0&1\end{matrix}
\right),
\Gamma=\left(\begin{matrix}
0&0\\
1&0\end{matrix}
\right).
\end{equation}

\section{Darboux transformation of the Frobenius NLS equation}
\   \
In this section,  let us consider the Darboux transformation of the Frobenius NLS equation, i.e. when all functions take values in  the  commutative algebra $Z_2=\C[\Gamma]/(\Gamma^2)$ and $\Gamma=(\delta_{i,j+1})_{ij}\in gl(2,\C).$ we can suppose the  Darboux transformation operator $T$ of the Frobenius NLS equation is as follows
\begin{equation}
T=T_{1}\lambda+T_{0},
\end{equation}
where
\begin{equation}\notag
T_{1}=\left(\begin{matrix}
a_{10}E+a_{11}\Gamma&b_{10}E+b_{11}\Gamma\\
c_{10}E+c_{11}\Gamma&d_{10}E+d_{11}\Gamma\\
\end{matrix}
\right),  \  \
T_{0}=\left(\begin{matrix}
a_{0}E+a_{1}\Gamma&b_{0}E+b_{1}\Gamma\\
c_{0}E+c_{1}\Gamma&d_{0}E+d_{1}\Gamma\\
\end{matrix}
\right).
\end{equation}

For the  eq.\eqref{4.4}, we can get
\begin{quote}
\begin{equation}
\begin{split}
 T_{x}&=T_{1x}\lambda+T_{0x}\\
&=\left(\begin{matrix}
a_{10x}E+a_{11x}\Gamma&b_{10x}E+b_{11x}\Gamma\\
c_{10x}E+c_{11x}\Gamma&d_{10x}E+d_{11x}\Gamma\\
\end{matrix}
\right)\lambda+\left(\begin{matrix}
a_{0x}E+a_{1x}\Gamma&b_{0x}E+b_{1x}\Gamma\\
c_{0x}E+c_{1x}\Gamma&d_{0x}E+d_{1x}\Gamma\\
\end{matrix}
\right)\\ \notag
&=\left(\begin{matrix}
{\bf 0}&-2ib_{10}E-2ib_{11}\Gamma\\
2ic_{10}E+2ic_{11}\Gamma&{\bf 0}\\
\end{matrix}
\right)\lambda^2+\left(\begin{matrix}
H1&H2\\
H3&H4\\
\end{matrix}
\right)\lambda +\left(\begin{matrix}
K1&K2\\
K3&K4\end{matrix}
\right),
\end{split}
\end{equation}
\end{quote}

where
\begin{equation}
 \begin{split}
H_1&=(q^{[1]}c_{10}+b_{10}q^{*})E+(r^{[1]}c_{10}+r^{[1]}c_{11}+b_{10}r^{*}+b_{11}q^{*})\Gamma,\\
H_2&=(-2ib_{0}+q^{[1]}d_{10}-a_{10}q)E+(-2ib_{1}+q^{[1]}d_{11}+r^{[1]}d_{10}-a_{10}r-a_{11}q)\Gamma,\\
H_3&=(2ic_{0}-q^{[1]*}a_{10}+d_{10}q^{*})E+(2ic_{1}-q^{[1]*}a_{11}+r^{[1]*}a_{10}+d_{10}r^{*}+d_{11}q^{*})\Gamma,\\
H_4&=(-q^{[1]*}b_{10}-c_{10}q)E+(-q^{[1]*}b_{11}-r^{[1]*}b_{10}-c_{10}r-c_{11}q)\Gamma,\\
K_1&=(q^{[1]}c_{0}+q^{*}b_{0})E+(r^{[1]}c_{0}+q^{[1]}c_{1}+r^{*}b_{0}+b_{1}q^{*})\Gamma,\\
K_2&=(q^{[1]}d_{0}-a_{0}q)E+(q^{[1]}d_{1}+r^{[1]}d_{0}-a_{0}r-a_{1}q)\Gamma,\\
K_3&=(-q^{[1]*}a_{0}+d_{0}q^{*})E+(-q^{[1]*}a_{1}-r^{[1]*}a_{0}+r^{*}d_{0}+d_{1}q^{*})\Gamma,\\
K_4&=(-q^{[1]*}b_{0}-c_{0}q)E+(-q^{[1]*}b_{1}-r^{[1]*}b_{0}-c_{0}r-c_{1}q)\Gamma.\\
 \end{split}
\end{equation}

By comparing the coefficient in terms of $\lambda$, we can get

\begin{equation}\notag
b_{10}=b_{11}= 0,
\end{equation}
\begin{equation}\notag
c_{10}=c_{11}=0,
\end{equation}
\begin{equation}\notag
a_{10x}=a_{11x}=0,
\end{equation}
\begin{equation}\notag
d_{10x}=d_{11x}=0,
\end{equation}
\begin{equation}\notag
-2ib_{0}+q^{[1]}d_{10}-qa_{10}=0,
\end{equation}
\begin{equation}\notag
-2ib_{1}+r^{[1]}d_{10}+q^{[1]}d_{11}-qa_{11}-ra_{10}=0.
\end{equation}

So functions $a_{10},a_{11}, d_{10},d_{11}$, are independent on $x$. Also  the following
identities must hold
\begin{equation}\notag
a_{0x}=q^{[1]}c_{0}+b_{0}q^{*},
\end{equation}
\begin{equation}\notag
b_{0x}=q^{[1]}d_{0}-a_{0}q,
\end{equation}
\begin{equation}\notag
c_{0x}=-q^{[1]*}a_{0}+d_{0}q^{*},
\end{equation}
\begin{equation}\notag
d_{0x}=-q^{[1]*}b_{0}-c_{0}q,
\end{equation}
\begin{equation}\notag
a_{1x}=r^{[1]}c_{0}+q^{[1]}c_{1}+b_{0}r^{*}+q^{*}b_{1},
\end{equation}
\begin{equation}\notag
b_{1x}=q^{[1]}d_{1}+d_{0}r^{[1]}-a_{0}r-a_{1}q,
\end{equation}
\begin{equation}\notag
c_{1x}=-r^{[1]*}a_{0}-q^{[1]*}a_{1}+d_{0}r^{*}+q^{*}d_{1},
\end{equation}
\begin{equation}\notag
d_{1x}=-q^{[1]*}b_{1}-b_{0}r^{[1]*}-c_{0}r-c_{1}q,
\end{equation}
and we can further get
\begin{equation}
(b_{0}c_{0})_{x}=(a_{0}d_{0})_{x},\  \   \   \
(b_{1}c_{0}+b_{0}c_{0})_{x}=
(a_{1}d_{0}+a_{0}d_{1})_{x}.
\end{equation}

For $T_{t}+TN=N^{[1]}T$, we can get functions $a_{10},a_{11}, d_{10},d_{11}$ should be independent on $t$ using the same method. This suggests that $a_{10}$, $a_{11}$, $d_{10}$ and $d_{11}$ are constants. Then the form of Dardoux transformation operator is as following
\begin{equation}
T=T_{1}(\lambda)
 =\left(\begin{matrix}
a_{10}E+a_{11}\Gamma&{\bf 0}\\
{\bf 0}&d_{10}E+d_{11}\Gamma\\
\end{matrix}
\right)\lambda+\left(\begin{matrix}
a_{0}E+a_{1}\Gamma&b_{0}E+b_{1}\Gamma\\
c_{0}E+c_{1}\Gamma&d_{0}E+d_{1}\Gamma\\
\end{matrix}
\right).\\
\end{equation}

Now we will try to determine the specific expression of the matrix $T$. In order to solve this problem, we must use eigenfunction, which contains $\lambda$, to determine their expressions through the parameterized method. Next, we will introduce the properties of eigenfunctions in the spectral problem in a lemma as follows
\begin{lemma}\label{linearconstraint}
\  \ For  $\varphi_{j10}=\varphi_{j10}(x,t,\lambda_{j})$, $\varphi_{j11}=\varphi_{j11}(x,t,\lambda_{j})$, $\varphi_{j20}=\varphi_{j20}(x,t,\lambda_{j})$, $\varphi_{j21}=\varphi_{j21}(x,t,\lambda_{j})$, $j=1,2,$  if $\varphi_{110}$ , $\varphi_{111}$, $\varphi_{120}$, $\varphi_{121}$ are components of  the solution $\varphi_{1}= \left(\begin{matrix}\varphi_{110}E+\varphi_{111}\Gamma\\
\varphi_{120}E+\varphi_{121}\Gamma\end{matrix}
\right)$  of  eq.\eqref{laxfnls} with  $\lambda=\lambda_{1}.$ And the $-\varphi_{120}^{*}(x,t,\lambda_{1}^{*})$, $-\varphi_{121}^{*}(x,t,\lambda_{1}^{*})$, $\varphi_{110}^{*}(x,t,\lambda_{1}^{*})$, $\varphi_{111}^{*}(x,t,\lambda_{1}^{*})$ will  also be the components of a new solution $\varphi_{2}$ of eq.\eqref{laxfnls}. That means the linear system \eqref{laxfnls} admits the following constraint $ \varphi_{210}=-\varphi_{120}^{*}(x,t,\lambda_{1}^{*})$, $\varphi_{211}=-\varphi_{121}^{*}(x,t,\lambda_{1}^{*})$,
$\varphi_{220}=\varphi_{110}^{*}(x,t,\lambda_{1}^{*})$, $\varphi_{221}=\varphi_{111}^{*}(x,t,\lambda_{1}^{*})$, with $\lambda_{2}=\lambda_{1}^*.$
\end{lemma}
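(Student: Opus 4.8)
The plan is to recognize that the asserted map $\varphi_1\mapsto\varphi_2$ is simply a conjugate (reality) reduction of the spectral problem \eqref{laxfnls}, and to reduce the entire statement to two symmetry identities for the coefficient matrices $F$ and $G$. Writing the given solution as the $Z_2$-valued two-vector $\varphi_1=(\varphi_{110}E+\varphi_{111}\Gamma,\ \varphi_{120}E+\varphi_{121}\Gamma)^{\mathrm{T}}$ at $\lambda=\lambda_1$, I would first observe that the prescribed components of $\varphi_2$ are exactly $\varphi_2=\sigma\,\overline{\varphi_1}$, where $\sigma=\begin{pmatrix}0&-E\\ E&0\end{pmatrix}$ and $\overline{(\cdot)}$ denotes entrywise complex conjugation of the scalar coefficients together with the replacement $\lambda_1\mapsto\lambda_1^*$. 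The point of this bookkeeping is that $E$ and $\Gamma$ are \emph{real} matrices, so $\overline{(\cdot)}$ is a $\C$-antilinear automorphism of $Z_2$ fixing $E$ and $\Gamma$; it commutes with all of the algebraic structure and acts only on the complex scalars $q,r,\lambda$ and on the eigenfunction entries.

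Second, I would establish the two conjugation symmetries
\begin{equation*}
\sigma\,\overline{F(\lambda_1)}\,\sigma^{-1}=F(\lambda_1^*),\qquad
\sigma\,\overline{G(\lambda_1)}\,\sigma^{-1}=G(\lambda_1^*),
\end{equation*}
both of which follow from a direct $2\times2$ block computation together with the elementary identity $\sigma\begin{pmatrix}A&B\\ C&D\end{pmatrix}\sigma^{-1}=\begin{pmatrix}D&-C\\ -B&A\end{pmatrix}$ (since $\sigma^{2}=-\mathrm{I}$). For $F$ the content is that conjugation sends the diagonal block $-i\lambda E$ to $i\lambda^{*}E$ while swapping the off-diagonal blocks up to sign, which reassembles into $F(\lambda_1^*)$ precisely because of the conjugate reduction tying the $(2,1)$ block to the conjugate of the $(1,2)$ block. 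For $G$ the same mechanism applies blockwise: writing $G=\begin{pmatrix}P&Q\\ R&-P\end{pmatrix}$, I would verify the three relations $\overline{P(\lambda)}=-P(\lambda^*)$, $\overline{Q(\lambda)}=-R(\lambda^*)$ and $\overline{R(\lambda)}=-Q(\lambda^*)$; inserting these into the block formula returns $\begin{pmatrix}P(\lambda^*)&Q(\lambda^*)\\ R(\lambda^*)&-P(\lambda^*)\end{pmatrix}=G(\lambda_1^*)$.

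Third, with the symmetries in hand the conclusion is immediate. Conjugating $(\varphi_1)_x=F(\lambda_1)\varphi_1$ gives $(\overline{\varphi_1})_x=\overline{F(\lambda_1)}\,\overline{\varphi_1}$; multiplying on the left by $\sigma$ and inserting $\sigma^{-1}\sigma$ yields $(\sigma\overline{\varphi_1})_x=\big(\sigma\,\overline{F(\lambda_1)}\,\sigma^{-1}\big)(\sigma\overline{\varphi_1})=F(\lambda_1^*)\,\varphi_2$, i.e. $(\varphi_2)_x=F(\lambda_2)\varphi_2$ with $\lambda_2=\lambda_1^*$. The identical argument applied to the time part produces $(\varphi_2)_t=G(\lambda_2)\varphi_2$, so $\varphi_2$ solves \eqref{laxfnls} at $\lambda_2=\lambda_1^*$, which is exactly the claimed constraint.

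I expect the only genuine work to be the verification of the second symmetry for $G$: because $G$ carries $\lambda^{2}$, $\lambda$ and $\lambda$-free terms as well as the derivatives $q_x,r_x$ and the mixed $\Gamma$-coefficient $i(q^{*}r+qr^{*})$, one must confirm that complex conjugation combined with $\lambda\mapsto\lambda^{*}$ sends each power of $\lambda$ and each derivative term to the correct position, and in particular that the real cross term $i(q^{*}r+qr^{*})$ flips sign so that $\overline{P(\lambda)}=-P(\lambda^{*})$. The $F$-identity and the final assembly are then routine. A minor point worth flagging is that, for the reduction to close, the $(2,1)$ block of $F$ should be read as the conjugate reduction $-q^{*}E-r^{*}\Gamma$ (consistent with the $(2,1)$ block of $G$ and with $\mu=-\psi^{*}$ in the scalar case); this is precisely what makes the constraint a reality condition and is essential for the symmetry to hold.
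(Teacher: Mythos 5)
Your proof is correct, and it reaches the same conclusion by what is at bottom the same mechanism as the paper --- complex conjugation of the linear system combined with the antisymmetric swap of the two block components --- but your packaging is genuinely different and in two respects stronger. The paper works entirely component-wise: it writes out the four scalar $x$-equations for $\varphi_{110},\varphi_{111},\varphi_{120},\varphi_{121}$, conjugates them, and reads off the constraint by inspection; it never touches the $t$-part, so strictly speaking it only verifies that $\varphi_2$ solves $\varphi_x=F\varphi$ and leaves $\varphi_t=G\varphi$ unchecked. You instead encode the map as $\varphi_2=\sigma\overline{\varphi_1}$ with $\sigma=\left(\begin{smallmatrix}0&-E\\E&0\end{smallmatrix}\right)$ and reduce everything to the two symmetries $\sigma\overline{F(\lambda)}\sigma^{-1}=F(\lambda^*)$ and $\sigma\overline{G(\lambda)}\sigma^{-1}=G(\lambda^*)$, which handles both halves of the Lax pair uniformly and makes it transparent why the constraint closes (the block identity $\sigma\left(\begin{smallmatrix}A&B\\C&D\end{smallmatrix}\right)\sigma^{-1}=\left(\begin{smallmatrix}D&-C\\-B&A\end{smallmatrix}\right)$ plus the relations $\overline{P}=-P(\lambda^*)$, $\overline{Q}=-R(\lambda^*)$, $\overline{R}=-Q(\lambda^*)$, all of which check out against the displayed $G$). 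Your flag about the $(2,1)$ block of $F$ is also well taken: as printed it reads $-qE-r\Gamma$, whereas the component equations in the paper's own proof (and the $(2,1)$ block of $G$) use $-q^*E-r^*\Gamma$; the reduction, and hence the lemma, requires the conjugated form, so this is a typo in the displayed $F$ that your argument correctly identifies as essential.
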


\begin{proof}

If $\varphi_{110}$, $\varphi_{111}$, $\varphi_{120}$, $\varphi_{121}$ is the solution of $eq.\eqref{laxfnls}$, i.e.
\begin{equation}
\begin{split}
\varphi_{110x}&=-i\lambda_{1}\varphi_{110}+q\varphi_{120},\\
\varphi_{111x}&=-i\lambda_{1}\varphi_{111}+q\varphi_{121}+r\varphi_{120},\\
\varphi_{120x}&=i\lambda_{1}\varphi_{120}-q^{*}\varphi_{110},\\
\varphi_{121x}&=i\lambda_{1}\varphi_{121}-q^{*}\varphi_{111}-r^{*}\varphi_{110}.\\
\end{split}
\end{equation}

Take the conjugate for the above equation on the both sides at the same time
\begin{equation}
\begin{split}
\varphi_{110x}^{*}&=i\lambda_{1}^{*}\varphi_{110}^{*}+q^{*}\varphi_{120}^{*},\\
\varphi_{111x}^{*}&=i\lambda_{1}^{*}\varphi_{111}^{*}+q^{*}\varphi_{121}^{*}+r^{*}\varphi_{120}^{*},\\
\varphi_{120x}^{*}&=-i\lambda_{1}^{*}\varphi_{120}^{*}-q\varphi_{110}^{*},\\
\varphi_{121x}^{*}&=-i\lambda_{1}^{*}\varphi_{121}^{*}-q\varphi_{111}^{*}-r\varphi_{110}^{*},\\
\end{split}
\end{equation}
and thus we can get
\begin{equation}
\varphi_{210}=-\varphi_{120}^{*}, \  \ \varphi_{211}=-\varphi_{121}^{*}, \  \
\varphi_{220}=\varphi_{110}^{*}, \  \  \varphi_{221}=\varphi_{111}^{*},
\end{equation}
and $\lambda_{2}=\lambda_{1}^{*}$.
\end{proof}

According to the Lemma \ref{linearconstraint}, by choosing $a_{10}=a_{11}=1$, $d_{10}=d_{11}=1$,  we can get the one fold Darboux transformation for Frobenius NLS equation as follows
\begin{equation}
\begin{split}
T=T_{1}(\lambda)
 =\left(\begin{matrix}
(\lambda+a_{0})E+(\lambda+a_{1})\Gamma&b_{0}E+b_{1}\Gamma\\
c_{0}E+c_{1}\Gamma&(\lambda+d_{0})E+(\lambda+d_{1})\Gamma\\
\end{matrix}
\right)\\
\\=\left(\begin{matrix}
E+\Gamma&0\\
0&E+\Gamma\\
\end{matrix}
\right)\lambda+\left(\begin{matrix}
a_{0}E+a_{1}\Gamma&b_{0}E+b_{1}\Gamma\\
c_{0}E+c_{1}\Gamma&d_{0}E+d_{1}\Gamma\\
\end{matrix}
\right),\\
\end{split}
\end{equation}
and then the new solution $q^{[1]}, r^{[1]}$ can be obtained by the Darboux transformation
\begin{equation} \label{db}
\begin{cases}
q^{[1]}=q+2ib_{0},\\
r^{[1]}=r+2i(b_{1}-b_{0}).
\end{cases}
\end{equation}

The Darboux transformation matrix $ T_{1} $ must satisfy the following equation
\begin{equation}
T_{1}(\lambda,\lambda_{j})\varphi_{j}=0,j=1,2.
\end{equation}

From the above equation, we can get the  expressions of functions $ b_{0},b_{1}$ easily by the eigenfunctions as follows

$b_{0}=\frac{(\lambda_{1}-\lambda_{2})\varphi_{110}\varphi_{210}}
{\varphi_{110}\varphi_{220}-\varphi_{120}\varphi_{210}}$,\\

$b_{1}=\frac{(\lambda_{1}-\lambda_{2})(\varphi_{110}^2\varphi_{211}\varphi_{220}+\varphi_{110}\varphi_{120}\varphi_{210}\varphi_{221}
+\varphi_{110}\varphi_{121}\varphi_{210}^2-\varphi_{110}\varphi_{120}\varphi_{210}\varphi_{211}
-\varphi_{111}\varphi_{120}\varphi_{210}^2-\varphi_{110}^2\varphi_{210}\varphi_{221})}{(\varphi_{110}\varphi_{220}-\varphi_{120}\varphi_{210})^2}$.\\

Combining with $\eqref{db}$, the following theorem can be derived.
\begin{theorem}
The Darboux transformation of the Frobenius NLS equation can be as
\begin{equation}\label{3.04}
\begin{cases}
q^{[1]}=q+2i\frac{(\lambda_{1}-\lambda_{2})\varphi_{110}\varphi_{210}}{\varphi_{110}\varphi_{220}-\varphi_{120}\varphi_{210}},\\
r^{[1]}=r+2i\frac{(\lambda_{1}-\lambda_{2})(\varphi_{110}^{2}(\varphi_{220}(\varphi_{211}-\varphi_{210})-\varphi_{210}\varphi_{221})
+\varphi_{210}^2(\varphi_{110}(\varphi_{120}+\varphi_{121})-\varphi_{111}\varphi_{120})
+\varphi_{110}\varphi_{120}\varphi_{210}(\varphi_{221}-\varphi_{211}))}
{(\varphi_{110}\varphi_{220}-\varphi_{120}\varphi_{210})^2}.
\end{cases}
\end{equation}
\end{theorem}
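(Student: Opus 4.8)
The plan is to obtain $q^{[1]}$ and $r^{[1]}$ directly from \eqref{db}, so that the whole theorem reduces to pinning down $b_0$ and $b_1$ from the kernel condition $T_1(\lambda_j,\lambda_j)\varphi_j=0$, $j=1,2$, recorded just above the statement. First I would substitute the explicit $T_1(\lambda)$ together with the column $\varphi_j$ whose entries are $\varphi_{j10}E+\varphi_{j11}\Gamma$ and $\varphi_{j20}E+\varphi_{j21}\Gamma$, and expand every product in the basis $\{E,\Gamma\}$ using $E^2=E$, $E\Gamma=\Gamma E=\Gamma$, $\Gamma^2=0$. Each of the two rows then splits into an $E$-part and a $\Gamma$-part, producing eight scalar equations for the eight unknowns $a_0,a_1,b_0,b_1,c_0,c_1,d_0,d_1$.

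The key structural point is that the $E$-parts decouple. The top-row $E$-equations for $j=1,2$ form the linear system $(\lambda_j+a_0)\varphi_{j10}+b_0\varphi_{j20}=0$, which is exactly the scalar NLS Darboux system; its coefficient determinant is $\varphi_{110}\varphi_{220}-\varphi_{120}\varphi_{210}$, and Cramer's rule immediately returns $b_0=(\lambda_1-\lambda_2)\varphi_{110}\varphi_{210}/(\varphi_{110}\varphi_{220}-\varphi_{120}\varphi_{210})$. Feeding this into $q^{[1]}=q+2ib_0$ reproduces the first line of \eqref{3.04} with no further work.

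For $b_1$ I would pass to the $\Gamma$-parts. The top-row $\Gamma$-equations are again linear in the pair $(a_1,b_1)$ with the same coefficient determinant $\varphi_{110}\varphi_{220}-\varphi_{120}\varphi_{210}$, but with inhomogeneous terms built from the already-determined $a_0,b_0$ and from the second components $\varphi_{j11},\varphi_{j21}$. The clean way to organise this is to note that $Z_2=\C[\Gamma]/(\Gamma^2)$ is the ring of dual numbers, so the entire Cramer computation can be run over $Z_2$ with $\varphi_{j10}+\varphi_{j11}\Gamma$ and $\varphi_{j20}+\varphi_{j21}\Gamma$, and $b_1$ is read off as the coefficient of $\Gamma$, carried over the squared denominator $(\varphi_{110}\varphi_{220}-\varphi_{120}\varphi_{210})^2$ since the inhomogeneity already carries one factor of the determinant. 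I expect this $\Gamma$-level bookkeeping to be the one genuinely laborious step and the place where monomials are easiest to lose, so I would expand the relevant minors term by term until the stated $b_1$ emerges.

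Finally I would assemble $r^{[1]}=r+2i(b_1-b_0)$. Placing $b_0$ over the common denominator $(\varphi_{110}\varphi_{220}-\varphi_{120}\varphi_{210})^2$, the subtraction $b_1-b_0$ contributes precisely the two extra numerator monomials $-\varphi_{110}^2\varphi_{210}\varphi_{220}$ and $+\varphi_{110}\varphi_{120}\varphi_{210}^2$ coming from $-(\lambda_1-\lambda_2)\varphi_{110}\varphi_{210}(\varphi_{110}\varphi_{220}-\varphi_{120}\varphi_{210})$; collecting the resulting eight monomials and factoring out $\varphi_{110}^2$, $\varphi_{210}^2$ and $\varphi_{110}\varphi_{120}\varphi_{210}$ reproduces the grouped numerator of the second line of \eqref{3.04}. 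The main obstacle is thus bookkeeping accuracy rather than any conceptual difficulty: the final regrouping is a polynomial identity, using no cancellation beyond the algebra relations above, so the essential check is that every term of $b_1-b_0$ is accounted for when it is rewritten in the factored form.
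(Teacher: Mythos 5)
Your proposal follows exactly the paper's own route: the paper likewise obtains the theorem by imposing the kernel condition $T_{1}(\lambda,\lambda_{j})\varphi_{j}=0$ for $j=1,2$, reading off $b_{0}$ and $b_{1}$ by Cramer's rule over the determinant $\varphi_{110}\varphi_{220}-\varphi_{120}\varphi_{210}$ (with the $\Gamma$-part system sharing that determinant), and substituting into $q^{[1]}=q+2ib_{0}$, $r^{[1]}=r+2i(b_{1}-b_{0})$; I checked that your final regrouping of $b_{1}-b_{0}$ over the squared determinant does reproduce the stated numerator from the paper's displayed $b_{0},b_{1}$. The one step you defer --- the explicit $\Gamma$-level minor expansion that is supposed to yield the stated $b_{1}$ --- is precisely the step the paper also omits, and it is the only place where anything could go wrong, so it deserves an actual term-by-term verification rather than the assertion that ``the stated $b_{1}$ emerges.''
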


Such transformation of the Frobenius NLS equation will help us to find soliton solutions in the next section.

\section{The soliton solution  of the Frobenius NLS equation}
\   \
In this section, our aims are to construct the soliton solutions of the Frobenius NLS equation by assuming suitable seed solutions. For simplicity, we assume a trivial solution as $q(x,t)=r(x,t)=0$, we can get soliton solutions of the Frobenius NLS equation by the Darboux transformation in eq.\eqref{3.04}. Then we can get the following solution by solving the linear equation $\eqref{laxfnls}$
\begin{equation}\notag
\begin{split}
\varphi_{j10}=\varphi_{j11}=e^{-i\lambda_{j}x-2i\lambda_{j}^{2}t},\\
\varphi_{j20}=\varphi_{j21}=e^{i\lambda_{j}x+2i\lambda_{j}^{2}t}.
\end{split}
\end{equation}

Let $j=1,2$, $\lambda_{1}=\alpha+i\beta$, substituting $\varphi_{j10}$, $\varphi_{j11}$, $\varphi_{j20}$, $\varphi_{j21}$ into $\eqref{3.04}$, then the new solution becomes
 \begin{equation}
\begin{cases}
q^{[1]}=2\beta e^{-2i(2\alpha^2t-2\beta^2t+\alpha x)}sech(2\beta x+8\alpha\beta t),\\
r^{[1]}=\beta(e^{2i(2\beta^2t-2\alpha^2t-\alpha x)+8\alpha\beta t+2\beta x}-1)sech^{2}(2\beta x+8\alpha\beta t).
\end{cases}
 \end{equation}

If we substitute these eigenfunctions into the Darboux transformation equations $\eqref{3.04}$  with $\alpha=0.5$, $\beta=1$,  then we can obtain the  soliton solutions of the Frobenius NLS equation whose graphs are shown in Fig.1.

 \section{Breather solutions of the Frobenius NLS equation}

From the previous section, solitons have been discussed for the Frobenius NLS equation. In this section, we will focus on a new kind of solution which  start from periodic seed solutions by the Darboux transformation. The new solutions are named breather solutions. The periodic seed solutions can be assumed as $q=ce^{i\rho}$, $r=ide^{i\rho}$, $\rho=ax+bt$, $a$, $b$, $c$, $d$ are arbitrary real constants. Substituting the periodic seed solutions into \eqref{Z2}, through proper simplification, we can obtain a constraint

 \begin{equation}
b+a^{2}-2c^{2}=0.
 \end{equation}
By defining
\begin{equation}
R=\sqrt{-a^2-4a\lambda-4c^2-4\lambda^2},
\end{equation}
the following basic solutions $\psi_{110}$, $\psi_{111}$, $\psi_{120}$, $\psi_{121}$ are obtained in terms of $R$ from the the spectral problem of eq.\eqref{laxfnls} as
\begin{equation} \notag
\begin{split}
\psi_{110}(R)&=n_{1}e^{\frac{i}{2}\rho+R(\frac{1}{2}x+(\lambda-\frac{a}{2})t)},\\
\psi_{111}(R)&=k_{1}e^{\frac{i}{2}\rho+R(\frac{1}{2}x+(\lambda-\frac{a}{2})t)},\\
\psi_{120}(R)&=\frac{(ai+2i\lambda+R)n_{1}}{2c}e^{-\frac{i}{2}\rho+R(\frac{1}{2}x+(\lambda-\frac{a}{2})t)},\\
\psi_{121}(R)&=\frac{(aci+2ci\lambda+Rc)k_{1}+(-iRd+ad+2d\lambda)n1}{2c^2}e^{-\frac{i}{2}\rho+R(\frac{1}{2}x+(\lambda-\frac{a}{2})t)}.\\
\end{split}
\end{equation}

The solutions will be trivial, if we put  eigenfunction  $\psi_{110}$, $\psi_{120}$, $\psi_{111}$, $\psi_{121}$, into the Darboux transformation directly. So, according to the principle of linear superposition and contract conditions, we can construct the solution of more complicated but very meaningful. The eigenfunctions $\varphi_{110}$, $\varphi_{120}$, $\varphi_{111}$, $\varphi_{121}$, associated with $\lambda_{1}$ can be given as
 \begin{equation}\label{22}
\begin{split} \notag
\varphi_{110}=\psi_{110}(R)(\lambda_{1})+\psi_{110}(-R)(\lambda_{1})-\psi_{120}^{*}(R)(\lambda_{1}^{*})-\psi_{120}^{*}(-R)(\lambda_{1}^{*}),\\
\varphi_{111}=\psi_{111}(R)(\lambda_{1})+\psi_{111}(-R)(\lambda_{1})-\psi_{121}^{*}(R)(\lambda_{1}^{*})-\psi_{121}^{*}(-R)(\lambda_{1}^{*}),\\
\varphi_{120}=\psi_{120}(R)(\lambda_{1})+\psi_{120}(-R)(\lambda_{1})+\psi_{110}^{*}(R)(\lambda_{1}^{*})+\psi_{110}^{*}(-R)(\lambda_{1}^{*}),\\
\varphi_{121}=\psi_{121}(R)(\lambda_{1})+\psi_{121}(-R)(\lambda_{1})+\psi_{111}^{*}(R)(\lambda_{1}^{*})+\psi_{111}^{*}(-R)(\lambda_{1}^{*}).\\
\end{split}
\end{equation}

It can be proved that $\varphi_{110}$, $\varphi_{111}$, $\varphi_{120}$ and $\varphi_{121}$ are also the solution of the Lax equation with $\lambda_{1}=\alpha+i\beta,$ $\alpha=-\frac{a}{2}$. Putting $\varphi_{110}$, $\varphi_{111}$, $\varphi_{120}$ and $\varphi_{121}$ into the Darboux transformation will lead to the construction of breather solutions of the Frobenius NLS equation. For brevity, in the following, an explicit expressions of breather solutions $q_{b}^{[1]}$, $r_{b}^{[1]}$  are obtained with specific parameters $n_{1}=k_{1}=d=1$
\begin{equation}\label{hx}
\begin{cases}
q_{b}^{[1]}=\frac{e^{i\rho}(e^{-4iR\beta t}(c^2+2R\beta-2\beta^2)+(e^{-4Rat+2Rx}+e^{4Rat-2Rx})\beta c+e^{4iR\beta t}(c^2-2R\beta -2\beta^2)}
{2c \cos(4R\beta t)-\beta(e^{-2Rx+4Rat}+e^{2Rx-4Rat})},\\
r_{b}^{[1]}=\frac{v_{1}}
{2(\beta c^3 -c^4)\cos(4R\beta t)^2+4(\beta c^3-\beta^2c^2)\cos(4R\beta t)cosh(2R(2at-x))
+(\beta^3c-\beta^2c^2)(cosh(4R(2at-x))+1)}.\\
\end{cases}
\end{equation}

Where $v_{1}$  is explicitly given in Appendix A and $R=\sqrt{\beta^{2}-c^{2}}$.
\begin{remark}
From the graphs of the breather solutions of the Frobenius NLS equation which are plotted in Fig. 2. We observe that the breather solution $q_{b}^{[1]}$ is periodic, while the amplitude of the optical wave of the breather solution $r_{b}^{[1]}$  is not periodic any more because of different amplitudes of peaks from the Fig. 2. We can also observe that there exist period-like fluctuations in the background  of fiber wave $r_{b}^{[1]}$ in the Fig. 3, Fig. 4. One of the important reasons causing the above phenomena is that $r_{b}^{[1]}$ is affected by $q_{b}^{[1]}$ which can be seen clearly from the equation \eqref{fnlsequation} of the Frobenius NLS equation.
\end{remark}

\section{Rouge waves solutions of the Frobenius NLS equation}
\  \
If the eigenvalue $\lambda$ of eigenfunction $\varphi$ is a null point, a limit as a rogue wave will happen in the wave propagation direction. Here, let's clarify this statement. We can find  $R=0$, when $\lambda=-\frac{a}{2}+ic$ , i.e. $\alpha=-\frac{a}{2}$, $\beta=c$. Then we do the Taylor expansion to the breather solution \eqref{hx} at $\beta=c$, $\alpha=-\frac{a}{2}$, one rogue wave solution of $q_{w}$, $r_{w}$ will be obtained. The form will be given in the following
\begin{equation}\label{gb}
\begin{split}
\begin{cases}
q_{w}=\frac{c(-64c^{2}\alpha^{2}t^{2}-16c^{4}t^{2}-32c^{2}\alpha xt+16ic^{2}t-4c^{2}x^{2}+3)e^{-2i(2\alpha^{2}t-c^{2}t+\alpha x)}}{64\alpha^{2}c^{2}t^{2}+16c^{4}t^{2}+32\alpha c^{2}tx+4c^{2}x^{2}+1},\\
r_{w}=\frac{e^{-i(a^2t-2c^2t-ax)}v_{2}+v_{3}}{(256t^4(a^2+c^2)^2+c^4(16x^3(x-8at)+t^2(32+384a^2x^2+128x((x-4at)c^2-4a^3t)))
+(8(x-2at)^2)c^2+1},
\end{cases}
\end{split}
\end{equation}
where $v_{2}$, $v_{3}$ are defined in Appendix B.
\begin{remark}
From the density graph of the rouge wave solutions $q_{w}$, $r_{w}$,  we can see that the rouge wave solution $q_{w}$ have a single peak with two caves on both sides of the peak, while the rouge wave solution $r_{w}$ only have a single peak without any obvious caves around. Because the optical pulse $r_{w}$ is affected by another optical pulse $q_{w}$, the optical signal $r_{w}$ has fluctuations around the rouge wave (see in Fig.5 and Fig.6). Similarly as $q_{w}$, the  optical pulse  $r_{w}$ only exists locally with all variables and disappear  as time and space go far which can be seen in the Fig. 7.
\end{remark}

\  \\
\   \

\   \

\   \
\   \
\begin{center}
Appendix A: Explicit expressions for $v_{1}$
\end{center}
\begin{footnotesize}
\begin{quote}
$v_{1}=\frac{1}{4}(8ic^2\beta^3-8ic^3\beta^2+32c\beta^3-16c^2\beta^2-16c^3\beta -(6\beta^2c^2+2\beta c^3-8\beta^3c-4i\beta^3c^2+4i\beta^2c^3)\cosh(4R(2at-x))+(8c^4-8\beta c^3)\cos(4R\beta t)^2)\sin(\rho)
+(2i\beta^3c^2-2i\beta^2c^3+6R\beta^2c-5R\beta c^2+6\beta^3c-\beta c^3-5\beta^2c^2+2i\beta^2c^2R-2i\beta c^3R)\sin(8Rt\beta+\rho)
+(2i\beta^3c^2-2i\beta^2c^3+6\beta^3c-\beta c^3-5\beta^2c^2+2iR\beta c^3-6R\beta^2c+5R\beta c^2-2i\beta^2c^2R)\sin(-8Rt\beta+\rho)
+\sinh(2R(2at-x))((2\beta^3c-4R\beta^3-4iR\beta^2c^2-4i\beta^3c^2+4i\beta^2c^3+4\beta^4-2\beta c^3+4i\beta^4c+2R\beta^2c+4\beta^2c^2++4i\beta c^4)\sin(4R\beta t+\rho)
+(2\beta c^3-4\beta^2c^2-4R\beta^3+4iR\beta^3c-2\beta^3c+4\beta^4-4i\beta^2c^2R+4i\beta^3c^2+4i\beta^2c^3-4i\beta c^4+2\beta^2cR-4i\beta^4c)\sin(-4R\beta t+\rho)
+(4\beta^3c-4\beta c^3)\sin(4R\beta t))-2R\beta c^2\sin(8R\beta t)
+((4iR\beta c^3-2R\beta c^2+4R\beta^2c-4i\beta^2c^2R)\sin(\rho)+2i\beta c^2R
+((2i\beta^3c+4i\beta^2c^2-4i\beta^4-4\beta c^4-2i\beta c^3+4\beta^2c^3-2iR\beta^2c)\cos(-4R\beta t+\rho)
+(-2i\beta^3c-4i\beta^2c^2+4i\beta^4-4\beta^3c^2+2i\beta c^3-4\beta^2c^3-2iR\beta^2c)\cos(4R\beta t+\rho))\sinh(2R(2at-x))
+(20R\beta^3+10\beta c^3+10\beta^3c+4iR\beta^3c-4iR\beta^2c^2+4i\beta c^4+4i\beta^3c^2-4i\beta^4c-10\beta^2cR-20\beta^4-8R\beta c^2+4i\beta^2c^3)\cosh(2R(2at-x))\sin(-4R\beta t+\rho)
+(4i\beta^3c^2+4i\beta c^4-4i\beta^2c^3+10R\beta^2c-20\beta^4+10\beta c^3+10\beta^3c-20R\beta^3-4i\beta^4c-4iR\beta^2c^2+8R\beta c^2)\cosh(2R(2at-x))\sin(4R\beta t+\rho)
+(4\beta c^4+4\beta^4c+4R\beta^3c+4iR\beta^3-4R\beta^2c^2)\sinh(2R(2at-x))\cos(4R\beta t+\rho)
+((4R\beta c^3-4R\beta^2c^2+2iR\beta c^2-4i\beta^2cR)\cos(\rho)
-4i\beta^2cR\cos(4R\beta t)))\sinh(4R(2at-x))
+(2R\beta c^3+i\beta c^3+2\beta^3c^2-2\beta^2c^3+5i\beta^2c^2-6i\beta^3c-5iR\beta c^2-2R\beta^2c^2+6i\beta^2cR)\cos(-8R\beta t+\rho)
+(2\beta^3c^2-2\beta^2c^3+5i\beta c^2R+5i\beta^2c^2-6i\beta^3c-6iR\beta^2c-2R\beta c^3)\cos(8R\beta t+\rho)
-(4\beta^2c^3-10i\beta^3c-8iR\beta c^2)\cos(4R\beta t+\rho)
+((2i\beta c^3-4\beta^2c^3+6i\beta^2c^2-8i\beta^3c+4\beta^3c^2)\cosh(4R(2at-x))
+4R\beta^2c\cosh(2R(2at-x))\sin(4R\beta t)
+(8i\beta c^3-8ic^4)\cos(4R\beta t)^2+4\beta^2c^3-4\beta c^4+2i\beta c^3-2i\beta^2c^2+
+8\beta^3c^2-8\beta^2c^3+16i\beta^2c^2+16i\beta c^3-32i\beta^3c)\cos(\rho)
+(2i\beta c^3-2i\beta^2c^2+2R\beta^2c^2+i\beta c^3+4\beta^2c^3-4\beta c^4)\cos(8R\beta t)
+(((16i\beta c^3-16i\beta^2c^2)\cos(\rho)+16\beta^2c^3-16\beta^3c^2-4i\beta c^3-8i\beta^2c^2+12i\beta^3c)\cos(4R\beta t)
+(8iR\beta c^2+10iR\beta^2c20i\beta^4+4\beta c^4-4\beta^2c^3+4\beta^3c^2-4\beta^4c
+4R\beta^3c-4R\beta^2c^2-20iR\beta^3-10i\beta c^3-10i\beta^3c
+4\beta^3c^2-4\beta^4c+4\beta c^4-10i\beta c^3+4R\beta^2c^2-4R\beta^3c+20iR\beta^3)+((16\beta^2c^2-16\beta c^3)\cos(4R\beta t)\sin(\rho)+20i\beta^4-10icR\beta^2))\cosh(2R(2at-x))
+(4\beta^3c^2-4\beta^4c+4R\beta^3c-4R\beta^2c^2+4i\beta^3R)\sinh(2R(2at-x)))\cos(-4R\beta t+\rho)
+4i\beta^3c+8\beta^4c-8\beta^3c^2+4i\beta^2c^2-8i\beta^4\\$
\end{quote}
\end{footnotesize}
\begin{center}

Appendix B: Expressions for $v_{2}$, $v_{3}$,
\end{center}
\begin{quote}
$v_{2}=-48ia^2c^3t^2x+24iac^3tx^2+24iac^2tx+32a^2c^4t^3+8c^4tx^2+16ac^3t^2-8c^3tx+24ic^4t^2-4ic^3x^3-6ic^2x^2+icx+32c^6t^3+18c^2t-16ic^5t^2x-24ia^2c^2t^2-
2iact-32ac^4t^2x+32ia^3c^3t^3+32iac^5t^3-5i,\\
v_{3}=16c^5t^2(ix-2ct-2iat)+c^4(8t^2(i+4ax-4a^2t)-8tx^2)+c^3(4x(ix^2+2t(1-3iax)+16at^2(ia(3x-2at)-1))
+2c^2(-i(4a^2t^2-8atx+x^2)-t)+ic(2at-x).\\$
\end{quote}

{\bf {Acknowledgements:}}
Chuanzhong Li  is  supported by the National Natural Science Foundation of China under Grant No. 11571192 and K. C. Wong Magna Fund in
Ningbo University.

%%%%%%%%%%%%%%%%% References  %%%%%%%%%%%%%%%%%%%%%%%%%%%%%%%%%%%%%%%

\vskip20pt

\begin{figure}[ht]
\centering
$(|q^{[1]}|^2)$
\includegraphics[width=5cm]{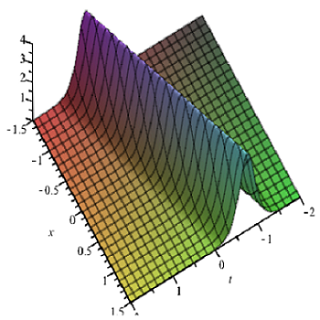}
$(|r^{[1]}|^2)$
 \includegraphics[width=5.5cm]{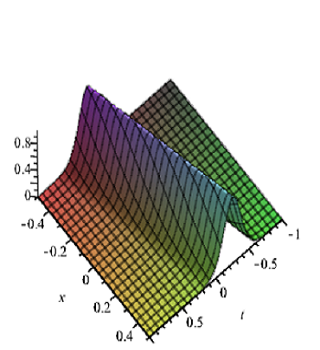}
\caption{Soliton solutions($q^{[1]}, r^{[1]}$) of the Frobenius NLS equation with $\alpha=0.5,\beta=1$.}
\end{figure}

\begin{figure}[ht]
\centering
$(|q_{b}^{[1]}|^2)$
\includegraphics[width=4.5cm]{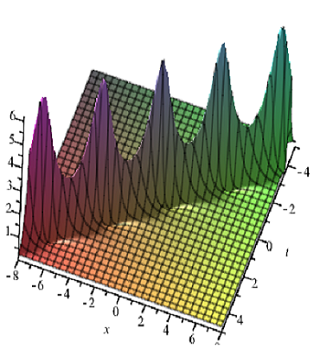}
$(|r_{b}^{[1]}|^2)$
\includegraphics[width=5cm]{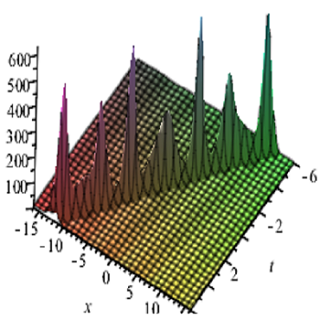}
\caption{Breather solution ($q_{b}^{[1]}, r_{b}^{[1]}$) of the Frobenius NLS equation with $b=-0.5, c=0.5, d=1,\alpha=0.5,\beta=1$.}
\end{figure}

\begin{figure}[ht]
\centering
$(|q_{b}^{[1]}|^2)$
\includegraphics[width=5cm]{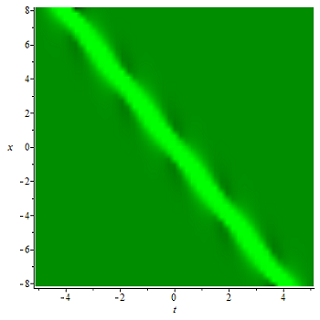}
$(|r_{b}^{[1]}|^2)$
\includegraphics[width=5cm]{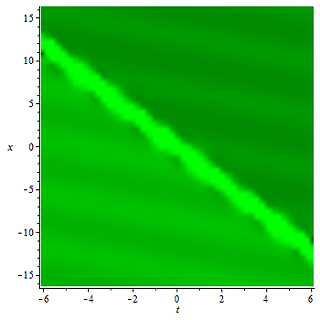}
\caption{Breather solution ($q_{b}^{[1]}, r_{b}^{[1]}$) of the Frobenius NLS equation with $b=-0.5, c=0.5, d=1, \alpha=0.5,\beta=1$.}
\end{figure}

\begin{figure}[ht]
\centering
($|r_{b}^{[1]}|^2$ with $x=0$)
\includegraphics[width=4.5cm]{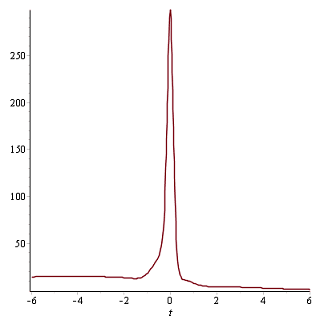}
($|r_{b}^{[1]}|^2$ with $t=0$)
\includegraphics[width=4.5cm]{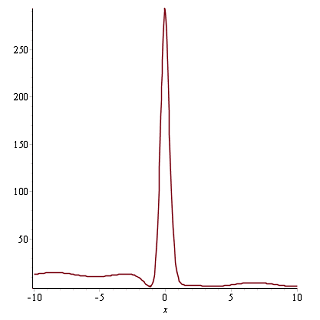}
\caption{Breather solution $r_{b}^{[1]}$ of the Frobenius NLS equation with $b=-0.5, c=0.5, d=1,\alpha=0.5,\beta=1$.}
\end{figure}

\begin{figure}[ht]
\centering
$(|q_{w}|^2)$
\includegraphics[width=6cm]{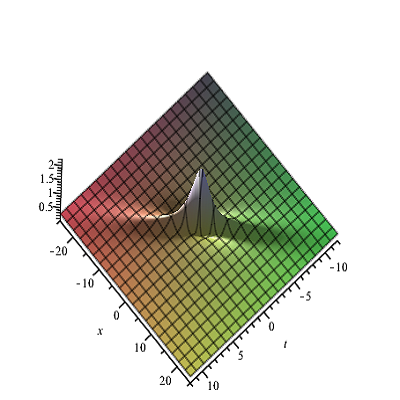}
$(|r_{w}|^2)$
\includegraphics[width=6cm]{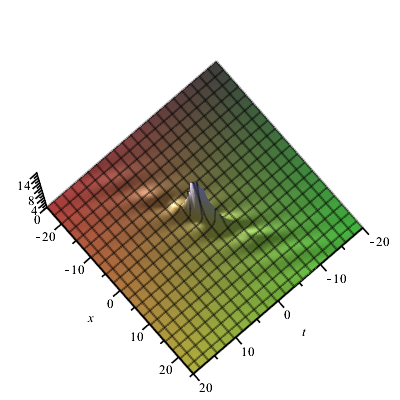}
\caption{Rogue wave solution ($q_{w}, r_{w}$) of the Frobenius NLS equation with $b=-0.5, c=0.5, d=1,\alpha=0.5,\beta=1$.}
\end{figure}

\begin{figure}[ht]
\centering
$(|q_{w}|^2)$
\includegraphics[width=5cm]{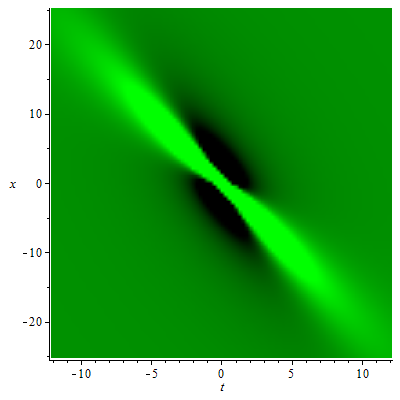}
$(|r_{w}|^2)$
\includegraphics[width=5cm]{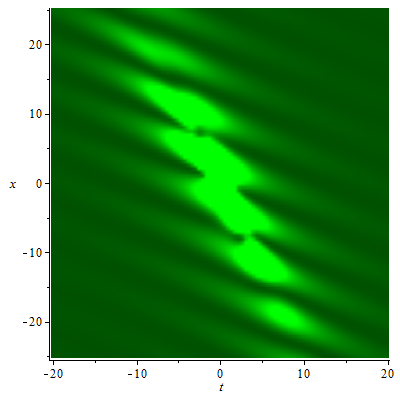}
\caption{Rogue wave solution ($q_{w}, r_{w}$) of the Frobenius NLS equation with $b=-0.5, c=0.5, d=1, \alpha=0.5,\beta=1$. }
\label{densityrogue}
\end{figure}

\begin{figure}[ht]
\centering
($|r_{w}|^2$ with $x=0$)
\includegraphics[width=4.5cm]{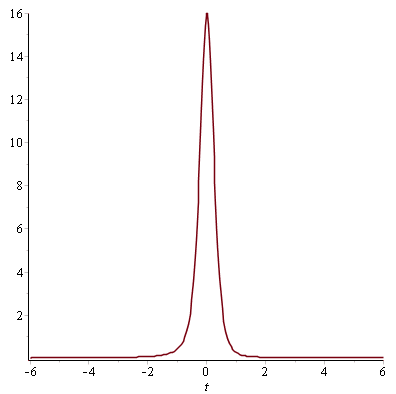}
($|r_{w}|^2$ with $t=0$)
\includegraphics[width=4.5cm]{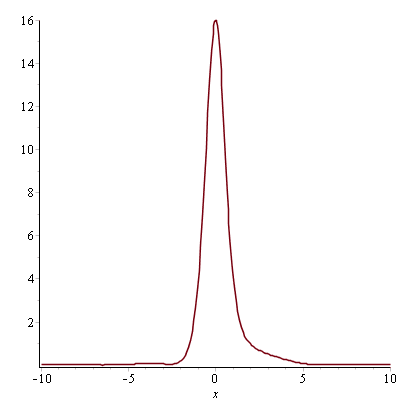}
\caption{Rogue wave  solution $r_{w}$ of the Frobenius NLS equation with $b=-0.5, c=0.5, d=1,\alpha=0.5,\beta=1$.}
\end{figure}

\end{document}